\newtheorem{theorem}{Theorem}
\newtheorem{lemma}[theorem]{Lemma}
\newcommand{\ord}{{\mathrm{ord}}}
\newcommand{\gf}{{\mathrm{GF}}}
\newcommand{\PSL}{{\mathrm{PSL}}}
\newcommand{\PAut}{{\mathrm{PAut}}} 
\newcommand{\MAut}{{\mathrm{MAut}}} 
\newcommand{\GAut}{{\mathrm{Aut}}}
\newcommand{\Sym}{{\mathrm{Sym}}}
\newcommand{\Z}{\mathbb{{Z}}}
\newcommand{\bC}{{\mathbb{C}}}
\newcommand{\m}{\mathbb{M}}
\newcommand{\cP}{{\mathcal{P}}} 
\newcommand{\cB}{{\mathcal{B}}}
\newcommand{\C}{{\mathcal{C}}}
\newcommand{\calC}{{\mathcal{C}}}
\newcommand{\cQ}{{\mathcal{Q}}}
\newcommand{\cN}{{\mathcal{N}}}
\newcommand{\bN}{{\mathcal{N}}}
\newcommand{\bQ}{{\mathcal{Q}}}
\newcommand{\ba}{{\mathbf{a}}}
\newcommand{\bc}{{\mathbf{c}}}
\newcommand{\bD}{{\mathbb{D}}}
\begin{document}

\begin{frontmatter}



\title{All binary linear codes that are invariant under $\PSL_2(n)$
}

\author[cseust]{Cunsheng~Ding}
\ead{cding@ust.hk}

\author[cseust]{Hao~Liu}
\ead{hliuar@ust.hk}

\author[mtu]{Vladimir D. Tonchev}
\ead{tonchev@mtu.edu} 

\address[cseust]{Department of Computer Science and Engineering, The Hong Kong University of Science and Technology, Clear Water Bay, Kowloon,
Hong Kong} 

\address[mtu]{Department of Mathematical Sciences, Michigan Technological University, 
Houghton, Michigan 49931, USA}


\begin{abstract}
The projective special linear group $\PSL_2(n)$ is $2$-transitive for all primes $n$ and 
$3$-homogeneous for $n \equiv 3 \pmod{4}$ on the set $\{0,1, \cdots, n-1, \infty\}$. 
It is known that the extended odd-like quadratic residue codes are invariant under $\PSL_2(n)$. 
Hence, the extended quadratic residue codes hold an infinite family of $2$-designs for primes 
$n \equiv 1 \pmod{4}$, an infinite family of $3$-designs for primes $n \equiv 3 \pmod{4}$. 
To construct more $t$-designs with $t \in \{2, 3\}$, one would search for other extended 
cyclic codes over finite fields that are invariant under the action of $\PSL_2(n)$. 
The objective of this paper is to prove that the extended quadratic residue binary codes 
are the only nontrivial extended binary cyclic codes that are invariant under 
$\PSL_2(n)$.       
\end{abstract}

\begin{keyword}
Cyclic code \sep linear code \sep quadratic residue code \sep projective linear group \sep $t$-design.

\MSC  05B05 \sep 94B05 \sep 94B15 

\end{keyword}

\end{frontmatter}

\section{Introduction}  

An $[n, \kappa, d]$ code $\C$ over $\gf(q)$ is a $\kappa$-dimensional linear subspace of 
$\gf(q)^n$ with minimum Hamming distance $d$. Trivial linear codes of length $n$ over $\gf(q)$ 
are the linear subspace consisting only of the zero vector of $\gf(q)^n$ with dimension $0$, 
the whole space $\gf(q)^n$ with dimension $n$, the subspace $\{a(1,1, \cdots, 1): a \in \gf(q)\}$ 
with dimension $1$, and the subspace 
$$
\left\{(c_0, c_1, \cdots, c_{n-1}) \in \gf(q)^n: \sum_{i=0}^{n-1} c_i=0 \right\}  
$$    
with dimension $n-1$. 

A linear code $\C$ over $\gf(q)$ is {\em cyclic} if
$(c_0,c_1, \cdots, c_{n-1}) \in \C$ implies $(c_{n-1}, c_0, c_1, \cdots, c_{n-2})$
$\in \C$.
We may identify a vector $(c_0,c_1, \cdots, c_{n-1}) \in \gf(q)^n$
with the polynomial
$$
c_0+c_1x+c_2x^2+ \cdots + c_{n-1}x^{n-1} \in \gf(q)[x]/(x^n-1).
$$
In this way, a code $\C$ of length $n$ over $\gf(q)$ always corresponds to a subset of the quotient ring
$\gf(q)[x]/(x^n-1)$.
A linear code $\C$ is cyclic if and only if the corresponding subset in $\gf(q)[x]/(x^n-1)$
is an ideal of the ring $\gf(q)[x]/(x^n-1)$.

It is well-known that every ideal of $\gf(q)[x]/(x^n-1)$ is principal. Let $\C=\langle g(x) \rangle$ be a
cyclic code, where $g(x)$ is monic and has the smallest degree among all the
generators of $\C$. Then $g(x)$ is unique and called the {\em generator polynomial,}
and $h(x)=(x^n-1)/g(x)$ is referred to as the {\em check polynomial} of $\C$. 

Given a linear code $\C$ of length $n$ over $\gf(q)$, we can extend $\C$ into another code $\overline{\C}$ of length $n+1$ over $\gf(q)$ by adding an extended coordinate, denoted by 
$\infty$, as follows: 
\begin{eqnarray}
\overline{\C}=\{(c_0,c_1, \cdots, c_{n-1}, c_{\infty}): (c_0, c_1, \cdots, c_{n-1}) \in \C \}, 
\end{eqnarray} 
where 
$$ 
c_{\infty}=-\sum_{i=0}^{n-1} c_i. 
$$
By definition, $\C$ and $\overline{\C}$ have the same dimension, but their minimum distances may 
be the same or may differ by one. 

Let $n$ be a prime and let $q$ be a prime power such that $\gcd(q, n)=1$ and $q$ is a quadratic 
residue modulo $n$. Let $m=\ord_n(q)$, and let $\alpha$ be a generator of $\gf(q^m)^*$, which 
is the multiplicative group of $\gf(q^m)$. Set $\beta=\alpha^{(q^m-1)/n}$. Then $\beta$ is a 
$n$-th primitive root of unity in $\gf(q^m)$. Define 
$$ 
g_0(x)=\sum_{i \in \bQ} (x-\beta^i) \mbox{ and } g_1(x)=\sum_{i \in \bN} (x-\beta^i), 
$$ 
where $\bQ$ and $\bN$ are the set of quadratic residues and nonresidues modulo $n$, respectively. 
It is easily seen that $g_0(x)$ and $g_1(x)$ are polynomials over $\gf(q)$ and are divisors of $x^n-1$.  
The cyclic codes of length $n$ over $\gf(q)$ with generator polynomials $g_0(x)$ and $g_1(x)$ are called \emph{odd-like 
quadratic residue codes}.  

The set of coordinate permutations that map a code $\C$ to itself forms a group, which is referred to as 
the \emph{permutation automorphism group\index{permutation automorphism group of codes}} of $\C$
and denoted by $\PAut(\C)$. If $\C$ is a code of length $n$, then $\PAut(\C)$ is a subgroup of the 
\emph{symmetric group\index{symmetric group}} $\Sym_n$. 

A \emph{monomial matrix\index{monomial matrix}} over $\gf(q)$ is a square matrix having exactly one 
nonzero element of $\gf(q)$  in each row and column. A monomial matrix $M$ can be written either in 
the form $DP$ or the form $PD_1$, where $D$ and $D_1$ are diagonal matrices and $P$ is a permutation 
matrix. 

The set of monomial matrices that map $\C$ to itself forms the group $\MAut(\C)$,  which is called the 
\emph{monomial automorphism group\index{monomial automorphism group}} of $\C$. Clearly, we have 
$$
\PAut(\C) \subseteq \MAut(\C).
$$

The \textit{automorphism group}\index{automorphism group} of $\C$, denoted by $\GAut(\C)$, is the set 
of maps of the form $M\gamma$, 
where $M$ is a monomial matrix and $\gamma$ is a field automorphism, that map $\C$ to itself. In the binary 
case, $\PAut(\C)$,  $\MAut(\C)$ and $\GAut(\C)$ are the same. If $q$ is a prime, $\MAut(\C)$ and 
$\GAut(\C)$ are identical. In general, we have 
$$ 
\PAut(\C) \subseteq \MAut(\C) \subseteq \GAut(\C). 
$$

By definition, every element in $\GAut(\C)$ is of the form $DP\gamma$, where $D$ is a diagonal matrix, 
$P$ is a permutation matrix, and $\gamma$ is an automorphism of $\gf(q)$.   
The automorphism group $\GAut(\C)$ is said to be $t$-transitive if for every pair of $t$-element ordered 
sets of coordinates, there is an element $DP\gamma$ of the automorphism group $\GAut(\C)$ such that its 
permutation part $P$ sends the first set to the second set. 
The automorphism group $\GAut(\C)$ is said to be $t$-homogeneous if for every pair of $t$-subsets of coordinates, there is an element $DP\gamma$ of the automorphism group $\GAut(\C)$ such that its 
permutation part $P$ sends the first set to the second set. 

Let $n$ be a prime. The projective special linear group $\PSL_2(n)$ 
consists of all  
permutations of the set $\{\infty\} \cup \gf(n)$
of the following form: 
$$ 
\tau_{(a,b,c,d)}(x)=\frac{ax+c}{bx+d} 
$$ 
with $ad-bc =1$, and the following conventions: 
\begin{itemize}
\item $\frac{a}{0} = \infty $ for all $a \in \gf(n)^*$. 
\item $\frac{\infty a +c}{\infty b +d} =\frac{a}{b}$. 
\end{itemize} 
The set of all such permutations is a group under
the function composition operation. 
It is known that $\PSL_2(n)$ is generated 
by the following two permutations \cite[p. 491]{MS77}:  
\begin{eqnarray*}
&& S: \ y \mapsto y+1, \\
&& T: \ y \mapsto - \frac{1}{y}. 
\end{eqnarray*}
The permutation group $\PSL_2(n)$ has a number of interesting properties, 
and has applications in both mathematics and engineering. 

Let $\C$ be a cyclic code of length $n$ over $\gf(q)$, where $q$ is a prime power with $\gcd(n, q)=1$. 
Let $\overline{\C}$ denote the extended code of $\C$, where $\infty$ is used to index the extended 
coordinate and other coordinates are indexed by the elements of $\gf(n)$. For any codeword 
$\overline{c}=(c_0, c_1, \cdots, c_{n-1}, c_{\infty})$ in $\overline{\C}$, any permutation $\tau$ 
of $\{\infty\} \cup \gf(n)$ acts on $\overline{c}$ as follows: 
\begin{eqnarray*}
\tau(\overline{c}) = (c_{\tau(0)}, c_{\tau(1)}, \cdots, c_{\tau(n-1)}, c_{\tau(\infty)}). 
\end{eqnarray*}   
The extended code $\overline{\C}$ is said to be invariant under $\PSL_2(n)$ if 
$$ 
\PSL_2(n) (\overline{\C}) = \overline{\C}. 
$$ 
In other words, the extended code $\overline{\C}$ is invariant under $\PSL_2(n)$ if the permutation part of the automorphism group $\GAut(\overline{\C})$ contains $\PSL_2(n)$. 

Linear codes that are invariant under $\PSL_2(n)$ have interesting properties. 
It is well known that the extended odd-like quadratic residue codes are invariant under $\PSL_2(n)$. 
The objective of this paper is to prove that the only such binary codes are the extended odd-like quadratic residue codes and the trivial codes.

\section{Motivations of this paper}

Let $\cP$ be a set of $v \ge 1$ elements, and let $\cB$ be a set of $k$-subsets 
of $\cP$, where $k$ is
a positive integer with $1 \leq k \leq v$.
 Let $t$ be a positive integer with $t \leq k$. The pair
$\bD = (\cP, \cB)$ is called a $t$-$(v, k, \lambda)$ {\em design\index{design}}
\cite{BJL},
or simply {\em $t$-design\index{$t$-design}}, 
if every $t$-subset of $\cP$ is 
contained in exactly $\lambda$ elements of
$\cB$. The elements of $\cP$ are called {\it points}, and those of $\cB$ are
 referred to as {\it blocks}.
We use $b$ to denote the number of blocks in $\cB$.  A $t$-design is 
called {\em simple\index{simple}} if $\cB$ does not contain any repeated blocks.
  A $t$-design is called {\em symmetric\index{symmetric design}} if $v = b$.
 It is clear that $t$-designs with $k = t$ or $k = v$ always exist. Such $t$-designs are {\em trivial}. 
A $t$-$(v,k,\lambda)$ design is referred to as a
 {\em Steiner system\index{Steiner system}} if $t \geq 2$ and $\lambda=1$,
 and is denoted by $S(t,k, v)$.

Let $\C$ be a $[v, \kappa, d]$ linear code over $\gf(q)$. Let $A_i:=A_i(\C)$  
be the
number of codewords with Hamming weight $i$ in $\C$, where $0 \leq i \leq v$.
 The sequence 
$(A_0, A_1, \cdots, A_{v})$ is
called the \textit{weight distribution} of $\C$, 
and $\sum_{i=0}^v A_iz^i$ is referred to as
the \textit{weight enumerator} of $\C$. For each $k$ with $A_k \neq 0$, let $\cB_k$ denote
the set of supports of all codewords of Hamming weight $k$ in $\C$, where the coordinates of a codeword
are indexed by $(0,1,2, \cdots, v-1)$. Let $\cP=\{0, 1, 2, \cdots, v-1\}$.  The pair $(\cP, \cB_k)$
may be a $t$-$(v, k, \lambda)$ design for some
positive  integer $\lambda$ and appropriate $t\ge 0 $, and is called a 
design supported by the code $\C$. 
In this case, we say that $\C$ holds a $t$-$(v, k, \lambda)$ design. 
If $\C$ holds a $t$-$(v,k,\lambda)$ design, its dual code
$\C^{\perp}$ admits majority-logic decoding
up to a certain number of errors determined by the design
parameters
\cite{R}, \cite{RB}, \cite[Section 8]{Tonchev}.

A classical approach to obtain $t$-designs from linear codes is by 
using the automorphism 
groups of linear codes (\cite{AK92,HP03,MS77,Tonchev,Tonchevhb}). 
A proof of the following theorem can be found in \cite[p. 308]{HP03}. 
 
\begin{theorem}\label{thm-designCodeAutm}
Let $\C$ be a linear code of length $n$ over $\gf(q)$ where $\GAut(\C)$ is $t$-transitive. Then the codewords of any weight $i \geq t$ of $\C$ hold a $t$-design.
\end{theorem}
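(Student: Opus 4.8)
The plan is to show that, for each fixed weight, the number of blocks through a $t$-subset of coordinates does not depend on the $t$-subset, by transporting one $t$-subset to another with a suitable element of $\GAut(\C)$. Fix a weight $i \geq t$ for which weight-$i$ codewords exist, and let $\cB_i$ be the set of supports of the weight-$i$ codewords of $\C$, viewed as blocks on the point set $\cP=\{0,1,\dots,n-1\}$. For a $t$-subset $T \subseteq \cP$ let $\lambda(T)$ denote the number of blocks $B \in \cB_i$ with $T \subseteq B$. The goal is to prove that $\lambda(T)$ is the same for every $t$-subset $T$; that common value is the index $\lambda$ of the resulting $t$-$(n,i,\lambda)$ design.

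First I would record that every automorphism preserves Hamming weight and that its effect on blocks is governed entirely by its permutation part. An element of $\GAut(\C)$ has the form $DP\gamma$, where $D$ is an invertible diagonal matrix, $P$ is a permutation matrix, and $\gamma$ is a field automorphism of $\gf(q)$. Multiplication by $D$ rescales entries by nonzero scalars, $\gamma$ sends nonzero entries to nonzero entries and fixes $0$, and $P$ permutes coordinates; hence $DP\gamma$ carries a weight-$i$ codeword $c$ to a weight-$i$ codeword whose support is exactly $P(\support(c))$. Since $DP\gamma$ stabilizes $\C$, the permutation $P$ maps $\cB_i$ bijectively onto itself, and set-containment is respected: $T \subseteq B$ if and only if $P(T) \subseteq P(B)$. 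Thus the permutation parts of $\GAut(\C)$ act as a permutation group on the block set $\cB_i$.

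Next I would invoke $t$-transitivity to compare any two $t$-subsets. Given $t$-subsets $T_1$ and $T_2$, fix orderings of each and apply the $t$-transitivity of $\GAut(\C)$ to obtain an automorphism whose permutation part $P$ sends the ordered $T_1$ to the ordered $T_2$, so that $P(T_1)=T_2$. The assignment $B \mapsto P(B)$ is then a bijection from $\{B \in \cB_i : T_1 \subseteq B\}$ onto $\{B' \in \cB_i : T_2 \subseteq B'\}$, whence $\lambda(T_1)=\lambda(T_2)$. Since $T_1$ and $T_2$ were arbitrary, $\lambda(T)$ is constant over all $t$-subsets, and $(\cP,\cB_i)$ is a $t$-design; the hypothesis $i \geq t$ supplies the required inequality $t \leq k = i$ on the block size.

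The main point to handle carefully is the reduction of the action on supports to the permutation part: one must verify that the diagonal and field-automorphism factors leave supports unchanged, so that the genuinely combinatorial action on $\cB_i$ depends only on $P$, and that this action is well defined, i.e. that $P$ carries supports of weight-$i$ codewords to supports of weight-$i$ codewords. Once this is established, the passage from transitivity to a support-preserving bijection is routine, and no computation of the weight distribution or of any orthogonality relation is needed.
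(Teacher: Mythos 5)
Your proof is correct, and it is essentially the standard argument: the paper does not prove this theorem itself but cites \cite[p.\ 308]{HP03}, and the proof given there proceeds exactly as yours does. That is, one checks that the diagonal and field-automorphism parts of an element $DP\gamma$ leave supports unchanged, so that $P$ permutes the blocks of $\cB_i$ while preserving containment, and then $t$-transitivity transports any $t$-subset to any other, forcing the number of blocks through a $t$-subset to be constant.
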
 


The following theorem can be derived directly from Propositions 4.6 and 
4.8 in \cite{BJL} (see also \cite[1.27]{Tonchevhb}). 

\begin{theorem}\label{thm-designCodeAutmhomo}
Let $\C$ be a linear code of length $n$ over $\gf(q)$ where $\GAut(\C)$ is $t$-homogeneous. Then the codewords of any weight $i \geq t$ of $\C$ hold a $t$-design.
\end{theorem}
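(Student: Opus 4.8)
The plan is to exploit the fact that a $t$-homogeneous group acts transitively on the $t$-subsets of the coordinate set, together with the elementary observation that automorphisms of $\C$ preserve Hamming weight and permute the supports of codewords of a fixed weight. Fix a weight $i \geq t$. If $\C$ has no codeword of weight $i$, then the block set $\cB_i$ is empty and the conclusion holds vacuously (with $\lambda=0$); so assume $A_i(\C)>0$ and let $\cB_i$ denote the set of supports of the weight-$i$ codewords, a family of $i$-subsets of $\cP=\{0,1,\ldots,n-1\}$.

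First I would check that the permutation part of each element of $\GAut(\C)$ maps $\cB_i$ into itself. Every $\sigma \in \GAut(\C)$ has the form $DP\gamma$, where $D$ is an invertible diagonal matrix, $P$ is a permutation matrix, and $\gamma$ is a field automorphism of $\gf(q)$. Applied coordinatewise, both the scaling by $D$ and the automorphism $\gamma$ send nonzero entries to nonzero entries and zero entries to zero, so they leave the support set-wise unchanged, while $P$ merely relabels the coordinates. Hence, if $\bc \in \C$ has weight $i$ and support $S$, then $\sigma(\bc) \in \C$ again has weight $i$, and its support is $P(S)$. Thus each $\sigma$ induces, via its permutation part $P$, a permutation of $\cP$ that carries $\cB_i$ bijectively onto itself. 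Writing $\bar{G}\le\Sym_n$ for the group of all such permutation parts (this is a group, since composing automorphisms multiplies their permutation parts by the commutation relations among diagonal matrices, permutation matrices, and coordinatewise field automorphisms), we have that $\bar{G}$ permutes $\cB_i$ and, by hypothesis, acts transitively on the $t$-subsets of $\cP$.

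The key step is then a short transitivity-and-counting argument. For a $t$-subset $T \subseteq \cP$, let $\lambda(T)=|\{B \in \cB_i : T \subseteq B\}|$ be the number of blocks containing $T$. For any $g \in \bar{G}$, the map $B \mapsto g(B)$ is a bijection of $\cB_i$ satisfying $T \subseteq B \iff g(T) \subseteq g(B)$, so it restricts to a bijection between the blocks through $T$ and the blocks through $g(T)$; hence $\lambda(T)=\lambda(g(T))$. Since $\bar{G}$ is transitive on $t$-subsets, any two $t$-subsets $T,T'$ satisfy $T'=g(T)$ for some $g\in\bar{G}$, so $\lambda(T)=\lambda(T')$. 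Therefore $\lambda(T)$ equals a single constant $\lambda$ independent of $T$, which is exactly the statement that $(\cP,\cB_i)$ is a $t$-$(n,i,\lambda)$ design.

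I do not expect any serious obstacle here; the argument is essentially formal. The only points requiring care are routine: one must confirm that the projection $\sigma=DP\gamma \mapsto P$ is well defined and multiplicative, so that $\bar{G}$ is genuinely a group; one should note that distinct codewords can share a support, so $\cB_i$ is taken as a set with no repeated blocks and the resulting design is simple; and the hypothesis $i \geq t$ is needed only to guarantee that a $t$-subset can lie inside an $i$-subset in the first place.
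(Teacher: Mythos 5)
Your proof is correct. Note that the paper does not actually write out a proof of this theorem: it only remarks that the statement ``can be derived directly from Propositions 4.6 and 4.8 in \cite{BJL}'', i.e., from the standard group-theoretic facts that the orbit of any $k$-subset under a $t$-homogeneous permutation group is a $t$-design, and that the block set $\cB_i$, being invariant under the permutation parts of $\GAut(\C)$, is a union of such orbits on a common point set with common block size, hence again a $t$-design. Your argument is the self-contained version of the same mechanism: instead of quoting the orbit result, you verify directly that the counting function $\lambda(T)$ is constant on $t$-subsets, which is exactly what transitivity on $t$-subsets plus invariance of $\cB_i$ yields. What your write-up adds beyond the citation is the explicit verification of the two points usually left implicit: that the projection $DP\gamma \mapsto P$ is well defined and multiplicative, so the permutation parts genuinely form a group $\bar{G}\le \Sym_n$ acting on $\cP$; and that both $D$ and $\gamma$ preserve supports, so that $\cB_i$ is $\bar{G}$-invariant and the blocks through $T$ biject with the blocks through $g(T)$. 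These are precisely the hypotheses one must check in order to legitimately invoke the cited propositions, so your proof and the paper's intended derivation have the same mathematical content; yours simply trades the external reference for an explicit counting argument, which makes the theorem accessible without consulting \cite{BJL}.
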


The two theorems above give a sufficient condition for a linear code to hold $t$-designs. To apply Theorems 
\ref{thm-designCodeAutm} and \ref{thm-designCodeAutmhomo}, we need to determine the automorphism group of $\C$ and show that it is $t$-transitive or $t$-homogeneous. 

The projective special linear group $\PSL_2(n)$ is $2$-transitive for all primes $n$ and $3$-homogeneous for primes $n \equiv 3 \pmod{4}$ on 
the set $\{0,1, \cdots, n-1, \infty\}$. It is known that the 
extended odd-like quadratic residue codes are invariant under $\PSL_2(n)$. 
Hence, the 
extended quadratic residue codes hold an infinite family of $2$-designs for $n \equiv 1 \pmod{4}$, 
an infinite family of $3$-designs for $n \equiv 3 \pmod{4}$. To construct more $t$-designs with 
$t \in \{2, 3\}$, one would search for other extended cyclic codes over finite fields that are 
invariant under the action of $\PSL_2(n)$
(cf. \cite{HP03}, \cite{MS77}). This is the main motivation of this paper.

\section{Binary linear codes invariant under $\PSL_2(n)$}

\subsection{All binary linear codes invariant under $\PSL_2(n)$ are extended cyclic codes} 

\begin{theorem}\label{thm-extendedcycliccode} 
Let $n$ be an odd prime. Let $\widetilde{\calC}$
 be a binary linear code of length $n+1$ 
and $\widetilde{\calC} \neq \gf(2)^{n+1}$.
 If $\widetilde{\calC}$ is invariant under 
$\PSL_2(n)$, then $\widetilde{\calC}$ is an extended cyclic code.
\end{theorem}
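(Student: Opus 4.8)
The plan is to reduce the statement to two facts. The first is that invariance under the single translation $S\colon y\mapsto y+1$ already forces the code obtained from $\widetilde{\calC}$ by deleting the coordinate $\infty$ to be cyclic. The second, which is the real content, is that a $\PSL_2(n)$-invariant code which is \emph{not} all of $\gf(2)^{n+1}$ must be contained in the even-weight code $E=\{x\in\gf(2)^{n+1}:\sum_i x_i=0\}$. Granting these, the theorem follows quickly: puncturing $\widetilde{\calC}$ at $\infty$ yields a cyclic code $\calC$, and since $\widetilde{\calC}\subseteq E$ every codeword satisfies $c_\infty=\sum_{i}c_i$, so $\widetilde{\calC}$ is exactly the extension $\overline{\calC}$.

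For the first fact, I would note that $S$ fixes $\infty$ and acts on $\gf(n)$ by $j\mapsto j+1$, so under the convention $\tau(\overline c)=(c_{\tau(0)},\dots,c_{\tau(n-1)},c_{\tau(\infty)})$ it permutes the first $n$ coordinates cyclically while fixing the last. Hence the punctured code is invariant under a cyclic shift, and is therefore cyclic. Moreover, once $\widetilde{\calC}\subseteq E$ is known, no codeword can be supported only on $\infty$ (such a word would have weight $1$), so puncturing is injective on $\widetilde{\calC}$ and the parity relation $c_\infty=\sum_i c_i$ recovers $\widetilde{\calC}$ from $\calC$; this part is routine linear algebra.

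The crux is the second fact, and here I would argue by contradiction. Suppose $\widetilde{\calC}\not\subseteq E$, so it contains a codeword $u$ of odd weight; the goal is to deduce $\widetilde{\calC}=\gf(2)^{n+1}$. The main device is to average $u$ over the translation subgroup $P=\{y\mapsto y+b:b\in\gf(n)\}\subseteq\PSL_2(n)$, which fixes $\infty$ and acts regularly on $\gf(n)$. A direct computation gives $\sum_{b\in\gf(n)}S^b(u)=(\sigma,\sigma,\dots,\sigma,u_\infty)$, where $\sigma=\sum_{i\in\gf(n)}u_i \bmod 2$ and I have used that $n$ is odd so that the $\infty$-entry is $n\,u_\infty=u_\infty$. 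Since $\wt(u)=\sigma+u_\infty$ is odd, either $(\sigma,u_\infty)=(0,1)$, which produces the unit vector $e_\infty\in\widetilde{\calC}$, or $(\sigma,u_\infty)=(1,0)$, which produces $v:=(1,\dots,1,0)\in\widetilde{\calC}$. In the first case transitivity of $\PSL_2(n)$ gives $e_\omega\in\widetilde{\calC}$ for every coordinate $\omega\in\{\infty\}\cup\gf(n)$, whence $\widetilde{\calC}=\gf(2)^{n+1}$. In the second case I apply $T\colon y\mapsto -1/y$: since $T(0)=\infty$, $T(\infty)=0$, and $T$ maps $\gf(n)^*$ into $\gf(n)^*$, one computes $T(v)=(0,1,\dots,1,1)$, and hence $v+T(v)=e_0+e_\infty$, a codeword of weight $2$. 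By $2$-transitivity the orbit of $e_0+e_\infty$ is the set of all weight-two vectors $e_i+e_j$, and these span $E$; together with the odd-weight word $v\notin E$ this again forces $\widetilde{\calC}=\gf(2)^{n+1}$, the desired contradiction.

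I expect this weight-two production step to be the delicate point. It is precisely here that both generators $S$ and $T$ of $\PSL_2(n)$, the oddness of $n$, and the $2$-transitivity of the action are all used in an essential way, and it is what excludes the a priori dangerous possibility of a $\PSL_2(n)$-invariant code straddling $E$ and its complement (for instance an extended quadratic residue code enlarged by a single odd-weight coset). By contrast, the reduction to cyclicity under $S$ and the recovery of $\widetilde{\calC}$ from its puncture are straightforward once $\widetilde{\calC}\subseteq E$ has been established.
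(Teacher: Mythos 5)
Your proof is correct, and it takes a genuinely different route from the paper's. The paper punctures at $\infty$ first, rules out weight-one codewords (using $T$ and the translations) so that puncturing preserves dimension, and then works inside the cyclic punctured code: it takes the basis of cyclic shifts of the generator polynomial $g(x)$, observes that all basis words share the same $\infty$-entry, and runs a case analysis on $g(1)\neq 0$ versus $g(1)=0$, in each case using transitivity of $\PSL_2(n)$ to force the $\infty$-entry to be the parity bit. You instead prove the stronger structural fact that any $\PSL_2(n)$-invariant code other than $\gf(2)^{n+1}$ must lie in the even-weight code $E$, from which the theorem falls out immediately; your key devices are averaging an odd-weight word over the translation subgroup (giving either $e_\infty$ or its complement) and then using $T$ plus $2$-transitivity to generate all weight-two vectors, which span $E$. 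What your route buys: it is self-contained (no appeal to the puncturing/dimension theorem cited from Huffman--Pless, and no generator-polynomial machinery), it treats all dimensions uniformly --- including $\dim=n$, i.e.\ $\widetilde{\calC}=E$ itself, a case that the paper's stated assumption $1\le\dim(\widetilde{\calC})\le n-1$ technically skips --- and it avoids the paper's slightly imprecise step where a permutation exchanging coordinate $i$ with $\infty$ is written as if it fixed all other coordinates (harmless there, since only weights matter, but you sidestep it entirely). What the paper's route buys: it sets up exactly the generator-polynomial and defining-set language that the subsequent theorems of the paper rely on, so its proof doubles as preparation for the main classification. One small remark: after obtaining $e_0+e_\infty$ you do not even need $2$-transitivity --- applying the translations to $e_0+e_\infty$ already yields $e_i+e_\infty$ for all $i\in\gf(n)$, and these $n$ independent vectors span $E$.
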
 

\begin{proof} 
If $\widetilde{\calC}$ has dimension $0$ or $n+1$
%
the conclusion is obviously true. 
We now assume that 
\begin{eqnarray}\label{eqn-assumpt}
1 \leq \dim\left(\widetilde{\calC} \right) \leq n-1. 
\end{eqnarray} 
Let $\calC$ be the punctured code of $\widetilde{\calC}$ at coordinate $\infty$.
Since the permutation $\tau_i(x) = x+i$ in $\PSL_2(n)$ acts cyclically on the coordinates $(0,1,...,n-1)$ when $i=1$, ${\calC}$ must be a cyclic code. 

Since $\dim\left(\widetilde{\calC} \right)\geq 1$, 
any minimum weight codeword in $\widetilde{\calC}$ 
has weight at least one. Suppose $\widetilde{\calC}$ has a codeword $c$ with Hamming weight $1$. 
If 
$$ 
c=(0,0, \cdots, 0, 1) \in \widetilde{\calC}, 
$$  
then the permutation 
$$ 
T(x)=-\frac{1}{x} 
$$ 
will transform $c$ into the  codeword 
$$ 
T(c)=(1,0,\cdots, 0, 0)
$$ 
in $\widetilde{\calC}$. Note that the permutation $\tau_i(x) = x+i$ in $\PSL_2(n)$ 
will transform 
$  
T(c) 
$ 
into the following codeword   
$$ 
(0, \cdots, 0, 1, 0, \cdots, 0, 0),
$$ 
where the nonzero bit $1$ could be in any coordinate $i$ with $0 \leq i \leq n-1$. 
This means that all codewords of weight $1$ are in $\widetilde{\calC}$. Consequently, 
$\widetilde{\calC}=\gf(q)^{n+1}$. This is contrary to the assumption that $\widetilde{\calC}\neq 
\gf(2)^{n+1}$. 
This proves that the minimum distance $d(\widetilde{\calC})>1$. It then follows from 
Theorem 1.5.1 in \cite{HP03} that 
$$ 
\dim(\calC)=\dim(\widetilde{\calC}). 
$$
Next we prove that $\widetilde{\calC}$ is the extended code of ${\calC}$.

Let $g(x)=\sum_{i=0}^{n-1}a_ix^i$ be the generator polynomial of ${\calC}$,
 and let $\deg(g)=n-1-k$, i.e. $\dim(\calC)=k$.
 Then the first $k$ cyclic shifts of the codeword
 ${\ba_0} = (a_0,a_1,..,a_{n-1})$ form a basis of ${\calC}$.
 Denote these first $k$ cyclic shifts by $\{{\ba_j} ,~0\le j\le k-1\}$, and their corresponding codewords in $\widetilde{\calC}$ as $\{\widetilde{\ba_j} ,~0\le j\le k-1\}$, which form a basis of $\widetilde{\calC}$. Then we must have $\widetilde{\ba_j} = \tau_j(\widetilde{\ba_0})$ for $0\le j\le k-1$, where 
$\tau_j(\infty)=\infty$, which shows that the $\infty$ coordinate of $\widetilde{\ba_j}$ is the same for $0\le j \le k-1$.

Let $\widetilde{\ba_0}=(a_0,a_1,...,a_{n-1},a_{\infty})$.
If $g(1)\neq 0$, the extended coordinate of $\ba_0$ should be $1$. Assume that $\widetilde{\calC}$ is not the extended code of ${\calC}$, which is equivalent to $a_\infty = 0$.  Since the $\infty$ coordinate of $\widetilde{\ba_j}$ is the same for $0\le j \le k-1$, the $\infty$ coordinate of 
all codewords in $\widetilde{\calC}$ is $0$. However, as $\PSL_2(n)$ is transitive, there exists a permutation and a codeword that transfer a $1$ in the codeword to the $\infty$ coordinate, which gives a contradiction. 
Thus, in this case, $\widetilde{\calC}$ is the extended code of ${\calC}$.

If $g(1)=0$, then $\calC$ is an even-weight code and all extended coordinates
 in its extended code should be $0$. 
Assume $\widetilde{\calC}$ is not the extended code
 of ${\calC}$, which is equivalent to 
$a_\infty = 1$. Since $n$ is odd and $(a_0,a_1,...,a_{n-1})$ has even weight, there exists  
an integer $i$ such that $a_i=0$, where $0 \leq i \leq n-1$.
 Since $\PSL_2(n)$ acts transitively on 
$\{0,1, \cdots, n-1, \infty\}$, 
there must be a permutation $\tau \in \PSL_2(n)$ 
that exchanges coordinate $i$ with coordinate $\infty$. We have then 
$$ 
\tau(a_0,a_1,...,a_{n-1},a_{\infty})=(a_0, \cdots, a_{i-1}, 1, a_{i+1}, \cdots, a_{n-1}, 0)
$$   
which is another codeword in $\widetilde{\C}$. It then follows that 
$$ 
(a_0, \cdots, a_{i-1}, 1, a_{i+1}, \cdots, a_{n-1}) \in \C, 
$$ 
which has odd weight. This is contrary to our assumption that $\calC$ has only even weights. 

\end{proof}

\subsection{The main theorem} 

The main result of this paper is the following. 

\begin{theorem}\label{thm-mymain}
Let $n$ be an odd prime. If $\widetilde{\C}$ is a binary code of length
 $n+1$ invariant under 
$\PSL_2(n)$, then $\widetilde{\C}$ must be one of the following: 
\begin{enumerate}
\item the zero code $\C(0)=\{(0,0, \cdots, 0)\}$; or  
\item the whole space $\C(n+1)$, which is the dual of $\C(0)$; or 
\item the code $\C(1)=\{(0,0, \cdots, 0), (1,1, \cdots, 1)\}$ of
 dimension $1$; or 
\item the code $\C(1)^\perp$, denoted by $\C(n)$, given by 
      $$
      \C(n)=\left\{(c_0, c_1, \cdots, c_n) \in \gf(2)^{n+1}: \sum_{i} c_i =0 \right\}; 
      $$
      or 
\item the extended code of one of the two odd-like quadratic residue binary codes of length $n$.       
\end{enumerate}  
\end{theorem}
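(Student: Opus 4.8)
The plan is to reduce everything to a cyclic code and then read off the possibilities from the defining set, using the subgroup of $\PSL_2(n)$ that fixes the two coordinates $0$ and $\infty$. First I would dispose of the two extreme dimensions: $\dim(\widetilde{\C})=0$ gives the zero code $\C(0)$ (case 1) and $\dim(\widetilde{\C})=n+1$ gives the full space $\C(n+1)$ (case 2). For every intermediate dimension $1\le \dim(\widetilde{\C})\le n$ we have $\widetilde{\C}\neq\gf(2)^{n+1}$, so Theorem~\ref{thm-extendedcycliccode} lets me write $\widetilde{\C}=\overline{\C}$ for a cyclic code $\C$ of length $n$ over $\gf(2)$. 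I then record its defining set $Z\subseteq\{0,1,\dots,n-1\}$ (the exponents $i$ with $\beta^i$ a zero of the generator polynomial $g$), so that $\dim(\C)=n-|Z|$, the condition $0\in Z$ is equivalent to $g(1)=0$, and $Z$ is automatically a union of $2$-cyclotomic cosets modulo $n$.

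The key step is to observe that a multiplier $x\mapsto sx$ belongs to $\PSL_2(n)$ exactly when $s$ is a quadratic residue: representing it as $\tau_{(a,0,0,d)}$ with $ad=1$ forces $s=a/d=a^2$. Each such map fixes both $0$ and $\infty$, so it acts on the punctured code $\C$ as the coordinate multiplier $\mu_s\colon i\mapsto si$, and invariance of $\overline{\C}$ under $\PSL_2(n)$ therefore forces $\C$ to be invariant under $\mu_s$ for every $s\in\bQ$. In terms of zeros this says $sZ=Z$ for all $s\in\bQ$, i.e. $Z\setminus\{0\}$ is a union of orbits of $\bQ$ acting by multiplication on $\gf(n)^*$. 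Since that action has exactly the two orbits $\bQ$ and $\bN$, I obtain
\[
Z\setminus\{0\}\in\{\emptyset,\ \bQ,\ \bN,\ \bQ\cup\bN\}.
\]

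Next I would eliminate the branch $0\in Z$ for nonzero codes. If $g(1)=0$ then $\C$ has only even-weight words, so the extended symbol $c_\infty=\sum_i c_i$ vanishes on every codeword of $\overline{\C}$; if $\overline{\C}\neq\{\mathbf{0}\}$, pick a codeword with a $1$ in some coordinate $j\neq\infty$ and use the transitivity of $\PSL_2(n)$ to choose $\tau$ with $\tau(\infty)=j$, so that $\tau(\overline{c})$ carries a $1$ in its $\infty$-coordinate, a contradiction. Hence for a nonzero $\overline{\C}$ we have $0\notin Z$, leaving $Z\in\{\emptyset,\bQ,\bN,\bQ\cup\bN\}$. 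These four defining sets give, respectively, $\C=\gf(2)^n$, the two odd-like quadratic residue codes, and the repetition code $\{\mathbf{0},\mathbf{1}\}$; extending them produces the even-weight code $\C(n)$ (case 4), the two extended quadratic residue codes (case 5), and $\C(1)$ (case 3). Together with cases 1 and 2 from the extreme dimensions this is exactly the asserted list, and when $2\notin\bQ$ the classes $\bQ,\bN$ are not $2$-cyclotomic, so case 5 simply does not occur.

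The genuinely essential point---and the one I would be most careful about---is the identification in the key step: the stabiliser of $\{0,\infty\}$ inside $\PSL_2(n)$ is precisely the group of quadratic-residue multipliers, not the full multiplier group (which is what one would get for $\mathrm{PGL}_2(n)$). This is exactly what permits $Z\setminus\{0\}$ to be a single class $\bQ$ or $\bN$, and hence what forces the quadratic residue codes to appear; for $\mathrm{PGL}_2(n)$ the very same argument would collapse to cases 1--4 only. A secondary care point is the dimension $n$ case (equivalently $Z=\emptyset$), which is not literally covered by the proof of Theorem~\ref{thm-extendedcycliccode}; there I would argue directly that a $\PSL_2(n)$-invariant code of dimension $n$ is the dual of a $\PSL_2(n)$-invariant line, and that such a line must be $\langle\mathbf{1}\rangle$ by transitivity, giving $\widetilde{\C}=\C(n)$.
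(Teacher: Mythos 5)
Your proposal is correct, but the engine of your argument is genuinely different from the paper's. Both proofs start the same way: reduce to an extended cyclic code via Theorem~\ref{thm-extendedcycliccode}, and eliminate $0\in T$ by the even-weight/transitivity contradiction. The paper then works spectrally: it proves Lemma~\ref{lammaBlahud} (Blahut's identity $D'(1/x)\equiv u(x)^2C'(x) \pmod{x^{n-1}-1}$, describing how $y\mapsto -1/y$ acts on Fourier transforms) and uses it, through a nonvanishing argument for the linearized polynomials $L_s(x)$, to prove Theorem~\ref{mainprop}: if some $l\in\cQ$ (resp.\ $\cN$) avoids the defining set, then the defining set misses all of $\cQ$ (resp.\ $\cN$). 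You reach the same dichotomy group-theoretically: the pointwise stabilizer of $\{0,\infty\}$ in $\PSL_2(n)$ consists exactly of the quadratic-residue multipliers $x\mapsto a^2x$; these fix $\infty$, hence act on the punctured cyclic code, forcing $sZ=Z$ for all $s\in\bQ$; and since $\bQ$ has only the two orbits $\bQ$ and $\bN$ on $\gf(n)^*$, the defining set minus $\{0\}$ must be $\emptyset$, $\bQ$, $\bN$ or $\bQ\cup\bN$. Your route is much shorter, avoids the Fourier machinery entirely, and isolates exactly why $\PSL_2$ rather than $\mathrm{PGL}_2$ is the relevant group --- the full multiplier group would leave only the trivial codes, as you note. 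What the paper's route buys is a self-contained proof of Blahut's lemma (stated without proof in \cite{Blahut91}) and spectral tools one might hope to push to $\gf(q)$, where invariance constrains only the permutation part of monomial maps and your identity $\mu_s(\C)=\C$ would degrade to monomial equivalence. Two of your side remarks are also on target: the observation that $\bQ$ and $\bN$ are not unions of $2$-cyclotomic cosets when $2\notin\bQ$ recovers the paper's case $n\equiv\pm3\pmod 8$ of Theorem~\ref{thm-pslcodeonlyif}; and your direct treatment of $\dim(\widetilde{\C})=n$ (the dual is an invariant line, necessarily $\langle(1,\dots,1)\rangle$ by transitivity) fills a genuine small gap, since the paper's proof of Theorem~\ref{thm-extendedcycliccode} restricts itself to $1\le\dim\le n-1$ while the dimension-$n$ case is exactly the one producing $\C(n)$.
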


According to Theorem 
\ref{thm-extendedcycliccode}, 
to prove Theorem \ref{thm-mymain}, we need to consider only extended 
cyclic codes.
 Before proving Theorem \ref{thm-mymain}, we need do some preparations. 
Specifically, we will make use of the defining set of a cyclic code, 
and the Fourier transform 
(also called the Mattson-Solomon polynomial) of a codeword. 

Note that $\gcd(2, n)=1$. Let $m$ denote the order of $2$ modulo $n$.
 The $2$-cyclotomic 
coset $\bC_i$ modulo $n$ containing $i$ is defined by 
$$ 
\bC_i=\{i, i2, i2^2, \cdots, i 2^{\ell_i-1} \} \bmod{n}, 
$$ 
where $\ell_i$ is the least positive integer such 
that $i2^{\ell_i} \equiv i \pmod{n}$.
 The smallest non-negative integer in $\bC_i$ is called the coset 
leader of $\bC_i$. Let $\Gamma_{(2, n)}$ denote the set of all coset leaders of the $2$-cyclotomic 
cosets modulo $n$. Then $\{\bC_i: i \in \Gamma_{(2, n)}\}$
is 
a partition of the set $\Z_n =\{0, 1, \cdots, n-1\}$.
 We identify $\Z_n$ with $\gf(n)$. 

Let $\alpha$ be a generator of $\gf(2^m)^*$, and let $\beta=\alpha^{(2^m-1)/n}$. Then $\beta$ 
is a $n$-th primitive root of unity in $\gf(2^m)$. It is straightforward to see that the 
minimal polynomial $\m_{\beta^i}(x)$ over $\gf(2)$ of $\beta^i$ is given by 
\begin{eqnarray}\label{eqn-march3}
\m_{\beta^i}(x)=\prod_{j \in \bC_i} (x- \beta^j). 
\end{eqnarray}
Clearly, 
$$ 
x^n-1=\prod_{i \in \Gamma_{(2, n)}} \m_{\beta^i}(x). 
$$ 

The generator polynomial $g(x)$ of any cyclic code $\C$ over $\gf(2)$ of length $n$ must be the 
product of some of irreducible polynomials $\m_{\beta^i}(x)$. The set 
$$ 
T=\{0 \leq i \leq n-1: g(\beta^i)=0\} 
$$ 
is called the defining set of the cyclic code $\C$ with respect to $\beta$, and must be the union 
of some $2$-cyclotomic cosets. 

\par 
\vspace*{.2cm}
The Fourier 
transform of a vector $c=(c_0, c_1, \cdots, c_{n-1}) \in \gf(2)^n$,  
denoted by $C=(C_0, C_1, \cdots, C_{n-1}) \in \gf(2^m)^n$, is given by 
$$ 
C_j=\sum_{i=0}^{n-1} \beta^{ij}c_i = c(\beta^j), 
$$ 
where $c(x)=\sum_{i=0}^{n-1} c_i x^i \in \gf(2)[x]$ and $0 \leq j \leq n-1$.

Let $\calC$ be a cyclic code of prime length $n$ and $\pi$ be a primitive element of $\gf(n)$.
 Then indices in $\gf(n)$ can be expressed by powers of $\pi$,
 and codewords of $\C$ can be reordered accordingly as  
$$ 
c=(c_0, c_{\pi^0}, c_{\pi^1}, \cdots, c_{\pi^{n-2}}). 
$$ 
Note that $\pi^{-1}$ is another generator of $\gf(n)^*$. Similarly, the Fourier transform 
$C$ of $c$ can be written in the permuted order,
$$
C=(C_0, C_{\pi^{-0}}, C_{\pi^{-1}}, \cdots, C_{\pi^{-(n-2)}}). 
$$

Rewrite the Fourier transform as $C_0=\sum_{i=0}^{n-1} c_i$ { and } $
C_j = c_0 + \sum_{i=1}^{n-1} \beta^{ij} c_i, \  j=1, 2, \cdots, n-1. 
$
With the index changing described above, the second equation can be expressed as 
\begin{equation}
C_{\pi^{-s}}=c_0 + \sum_{r=0}^{n-2} \beta^{\pi^{r-s}} c_{\pi^r}, \ \ \  s=0,1, \cdots, n-2. 
\end{equation} 
Let $C'_s = C_{\pi^{-s}}$ and $c'_r = c_{\pi^r}$ for $0\le s,r\le n-2$. Then 
\begin{equation}\label{eqn-jj152}
C'_{s}=c_0 + \sum_{r=0}^{n-2} \beta^{\pi^{r-s}} c'_{r}, \ \ \  s=0,1, \cdots, n-2.
\end{equation} 
This can be rewritten in the language of polynomials.
Define 
\begin{eqnarray*}
u(x) =  \sum_{r=0}^{n-2} \beta^{\pi^{-r}} x^r, \ \ 
c'(x) = \sum_{r=0}^{n-2} c'_rx^r, \ \ 
C'(x) = \sum_{r=0}^{n-2} C'_rx^r. 
\end{eqnarray*} 
Then all the equations in (\ref{eqn-jj152}) can be compactly expressed into 
\begin{equation}
C'(x)= \left(u(x)c'(x) + c_0 \sum_{i=0}^{n-2} x^i \right) \bmod{(x^{n-1}-1)}, 
\end{equation} 
which is a polynomial representation of the equation of the Fourier transform. 

Let $h = (n-1)/m$. Since $m$ is the order of $2$ modulo $n$, there exists a primitive element $\pi$ of $\gf(n)$ for which $\pi^h=2$ in $\gf(n)$. Then the nonzero elements of $\gf(n)$ can be presented as $\{\pi^r\,:\,0\le r\le n-2\}$. We denote the set of quadratic residues in it by $\cQ := \{\pi^{2r}\,:\,0\le r\le (n-3)/2\}$ and the set of quadratic nonresidues by $\cN$.

Recall that $\beta=\alpha^{(2^m-1)/n}$, which is a $n$-th primitive root of unity in $\gf(2^m)$, 
where $\alpha$ is a generator of $\gf(2^m)^*$.  We now prove the following lemma. 

\begin{lemma}\label{lemmaBeta} 
Let notation and symbols be as before. Define $\beta_l=\beta^l$ for $1 \leq l \leq n-1$. 
Then $\{1, \beta_l, \cdots, \beta_l^{m-1}\}$ is a basis of 
$\gf(2^m)$ over $\gf(2)$. Consequently, for any element $a \in \gf(2^m)$, there exists a 
polynomial $f(x)$ over $\gf(2)$ with degree less than $m$ such that $f(\beta^l) = a$.
\end{lemma}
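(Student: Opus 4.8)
The plan is to reduce everything to a single degree computation: I would show that $\beta_l = \beta^l$ has minimal polynomial of degree exactly $m$ over $\gf(2)$, after which both the basis claim and its consequence follow by routine linear algebra over fields.

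First I would observe that since $n$ is prime and $1 \le l \le n-1$, we have $\gcd(l, n) = 1$. By \eqref{eqn-march3}, the degree of the minimal polynomial $\m_{\beta^l}(x)$ over $\gf(2)$ equals the size $|\bC_l|$ of the $2$-cyclotomic coset of $l$ modulo $n$. By definition, $|\bC_l|$ is the least positive integer $d$ with $l2^d \equiv l \pmod{n}$; since $\gcd(l, n) = 1$, this is equivalent to $2^d \equiv 1 \pmod{n}$, whose least positive solution is $d = \ord_n(2) = m$. Hence $\deg \m_{\beta^l}(x) = m$.

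Next, since the minimal polynomial of $\beta_l$ over $\gf(2)$ has degree $m$, the subfield $\gf(2)(\beta_l)$ is a degree-$m$ extension of $\gf(2)$ contained in $\gf(2^m)$, and therefore coincides with $\gf(2^m)$. Moreover, the powers $1, \beta_l, \ldots, \beta_l^{m-1}$ are linearly independent over $\gf(2)$: any nontrivial linear relation among them would yield a nonzero polynomial over $\gf(2)$ of degree less than $m$ vanishing at $\beta_l$, contradicting that the minimal polynomial has degree $m$. These $m$ independent elements lie in the $m$-dimensional $\gf(2)$-space $\gf(2^m)$, so they form a basis.

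Finally, for the consequence, given any $a \in \gf(2^m)$ I would expand it in this basis as $a = \sum_{i=0}^{m-1} c_i \beta_l^i$ with $c_i \in \gf(2)$, and then $f(x) = \sum_{i=0}^{m-1} c_i x^i$ is a polynomial over $\gf(2)$ of degree less than $m$ satisfying $f(\beta^l) = a$. There is no real obstacle here: the only substantive ingredient is the coset-size computation, which rests entirely on the primality of $n$ forcing $\gcd(l, n) = 1$ for every $l$ in the stated range.
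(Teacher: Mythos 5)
Your proposal is correct and follows essentially the same route as the paper: both hinge on showing $|\bC_l|=m$ via the primality of $n$ (cancelling $l$ modulo $n$ to reduce $l2^d\equiv l$ to $2^d\equiv 1$), so that $\m_{\beta^l}(x)$ has degree $m$, after which the basis claim and the evaluation consequence follow by standard linear algebra. Your write-up merely makes explicit the linear-independence step that the paper leaves implicit; there is no substantive difference.
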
 

\begin{proof}
Let $\bC_l=\{l, l2, \cdots, l2^{m-1}\} \bmod n$ be the $2$-cyclotomic coset modulo $n$ containing $l$, 
where $1 \leq l \leq n-1$. Recall that $n$ is a prime. By definition, $m=\ord_n(2)$. Let $w$ be a 
positive integer such that $l2^w \equiv l \pmod{n}$. Then $l(2^w-1) \equiv 0 \pmod{n}$. Since $n$ 
is a prime and $1 \leq l \leq n-1$, $2^w \equiv 1 \pmod{n}$. It then follows from $m=\ord_n(2)$ that 
$w \geq m$. Consequently, $|\bC_l|=m$. 
Hence, the polynomial $\m_{\beta^l}(x)$ 
of (\ref{eqn-march3}) has degree $m$ and is irreducible. This is the minimal polynomial of $\beta^l$ 
over $\gf(2)$. It then follows that $\{1, \beta_l, \cdots, \beta_l^{m-1}\}$ is a basis of 
$\gf(2^m)$ over $\gf(2)$. Consequently, for any element $a \in \gf(2^m)$, there exists a 
polynomial $f(x)$ over $\gf(2)$ with degree less than $m$ such that $f(\beta^l) = a$. 
\end{proof}

Let $\C$ be a cyclic code over $\gf(2)$ with length $n$ and defining set $T$. Denote the extended code of $\calC$ by $\overline{\calC}$, where the extended coordinate $c_{\infty}$ is defined by 
$$c_\infty = \sum_{i=0}^{n-1}c_i$$
for any codeword $c = (c_0, c_1,...,c_{n-1})\in\calC$.

Consider now the permutation $T:  y \to - 1/y$ in $\PSL_2(n)$. For any  
$ 
\bar{c}=(c_0, c_1, \cdots, c_{n-1}, c_{\infty}) \in \overline{\C}, 
$ 
let $\bar{d}=(d_0, d_1, \cdots, d_{n-1}, d_{\infty})$ be the permuted vector of $c$ under $T$. 
Let $C$ and $D$ be the Fourier transforms of $(c_0, c_1, \cdots, c_{n-1})$ and $(d_0, d_1, 
\cdots, d_{n-1})$, respectively. Define the polynomials $D'(x) = \sum_{s=0}^{n-2}D'_sx^s$. 
We have the following relationship between $D'(x)$ and $C'(x)$, which was
 stated in \cite{Blahut91} 
without a proof. We state it as a general result here and present a proof.  

\begin{lemma}\label{lammaBlahud}
Let $D'(x)$, $C'(x)$, $u(x)$ be defined as above. We have 
\begin{equation}
D'\left(\frac{1}{x}\right) = u(x)^2 C'(x) \pmod{x^{n-1}-1}.
\end{equation}
\end{lemma}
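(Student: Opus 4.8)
The plan is to transport the permutation $T\colon y\mapsto -1/y$ through the reordered Fourier transform and collapse everything onto one short identity in $u(x)$. Throughout set $M=(n-1)/2$ and $e(x)=\sum_{i=0}^{n-2}x^{i}$, and recall that $\pi^{M}=-1$ in $\gf(n)$ and that we work in characteristic two. First I would read off the action of $T$ on the coordinates: it interchanges $0$ and $\infty$, while on $\gf(n)^{*}$ it acts by $\pi^{r}\mapsto -1/\pi^{r}=\pi^{M-r}$. Hence, after puncturing, the zeroth coordinate becomes $d_{0}=c_{\infty}=\sum_{i}c_{i}=C_{0}$, and $d_{\pi^{r}}=c_{\pi^{M-r}}$, i.e. $d'_{r}=c'_{M-r}$ with indices read modulo $n-1$. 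In polynomial language this is the reversal-and-shift
\begin{equation}\label{eq-dprime}
d'(x)=x^{M}\,c'(1/x)\pmod{x^{n-1}-1}.
\end{equation}

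Next I would substitute \eqref{eq-dprime} into the polynomial form of the Fourier transform. That form, $C'(x)=u(x)c'(x)+c_{0}e(x)$, is valid for an arbitrary vector and its reordered spectrum, so applied to $d$ it gives $D'(x)=u(x)d'(x)+d_{0}e(x)=x^{M}u(x)c'(1/x)+C_{0}e(x)$. Writing the same relation for $c$ at the reciprocal argument, $C'(1/x)=u(1/x)c'(1/x)+c_{0}e(x)$, and using $u(x)^{2}e(x)=e(x)$ (established below), the relationship I am after, namely $D'(x)=u(x)^{2}C'(1/x)$ (equivalently $D'(1/x)=u(1/x)^{2}C'(x)$), reduces to
\begin{equation}\label{eq-crux-Tprop}
\bigl[x^{M}u(x)+u(x)^{2}u(1/x)\bigr]\,c'(1/x)=(C_{0}+c_{0})\,e(x).
\end{equation}
Since $C_{0}=c_{0}+c'(1)$ and $c'(1/x)e(x)=c'(1)e(x)$, the left side of \eqref{eq-crux-Tprop} collapses to $c'(1)e(x)$ as soon as one knows the clean polynomial identity
\begin{equation}\label{eq-key-Tprop}
u(x)^{2}u(1/x)+x^{M}u(x)=e(x)\pmod{x^{n-1}-1}.
\end{equation}

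The crux is thus \eqref{eq-key-Tprop}, which I would prove by two elementary character sums. Expanding $u(x)u(1/x)$, the coefficient of $x^{p}$ equals $\sum_{a\in\gf(n)^{*}}\beta^{a(1+\pi^{p})}$. For $p\neq M$ the factor $1+\pi^{p}$ is a nonzero element of $\gf(n)$, so $a\mapsto a(1+\pi^{p})$ permutes $\gf(n)^{*}$ and the sum is $\sum_{b\in\gf(n)^{*}}\beta^{b}=-1=1$; for $p=M$ we have $1+\pi^{M}=0$, every term is $1$, and the sum is $n-1\equiv 0$ because $n-1$ is even. Therefore $u(x)u(1/x)=e(x)+x^{M}$. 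The same device gives $u(x)e(x)=\bigl(\sum_{a\in\gf(n)^{*}}\beta^{a}\bigr)e(x)=e(x)$, whence $u(x)^{2}e(x)=e(x)$. Combining these,
\begin{equation*}
u(x)^{2}u(1/x)+x^{M}u(x)=u(x)\bigl[u(x)u(1/x)+x^{M}\bigr]=u(x)\bigl[e(x)+x^{M}+x^{M}\bigr]=u(x)e(x)=e(x),
\end{equation*}
which is \eqref{eq-key-Tprop}; substituting $x\mapsto 1/x$ then returns the asserted spectral relationship between $D'$ and $C'$.

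The main obstacle I expect is not depth but bookkeeping: one must carry the extended coordinate $\infty$ correctly — it is exactly the swap $0\leftrightarrow\infty$ that injects the term $C_{0}e(x)$ through $d_{0}=c_{\infty}$ — and keep the half-period shift $x^{M}$ and the reciprocals $c'(1/x),u(1/x)$ aligned through the reduction (in particular, tracking whether the spectral twist is written with $u(x)^{2}$ or $u(1/x)^{2}$). The one genuinely clever point is spotting the factorization $u(x)\bigl[u(x)u(1/x)+x^{M}\bigr]$, after which the annihilation $u(x)e(x)=e(x)$ makes \eqref{eq-key-Tprop} fall out at once.
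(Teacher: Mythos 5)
Your argument is correct in substance and follows a genuinely different route from the paper's. The paper works coefficient-by-coefficient: it expresses $d_i=c_{-1/i}$ through the \emph{inverse} Fourier transform, expands $D_j$ directly, kills the $c_\infty$ terms using characteristic two, re-indexes by powers of $\pi$, and then recognizes the resulting double sum $\sum_{r}u_{s-r}\sum_{t}u_{r-t}C'_t$ as the coefficient array of $u(x)^2C'(x)$ via a long polynomial manipulation. You instead encode the permutation once as the polynomial identity $d'(x)=x^{M}c'(1/x)$ (with your notation $M=(n-1)/2$, $e(x)=\sum_{i=0}^{n-2}x^i$), apply the forward-transform relation $V'(x)=u(x)v'(x)+v_0e(x)$ to both $c$ and $d$, and reduce the lemma to the closed identities $u(x)u(1/x)=e(x)+x^{M}$ and $u(x)e(x)=e(x)$, both instances of the character sum $\sum_{a\in\gf(n)^{*}}\beta^{ca}$ being $1$ or $0$ according as $c\neq 0$ or $c=0$ in $\gf(n)$. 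This isolates the actual arithmetic content of the lemma and handles the $\infty$-coordinate transparently through $d_0=c_\infty=C_0$; each individual step you give checks out.

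The one thing you must repair is the closing claim. Your computation establishes $D'(x)=u(x)^{2}C'(1/x)$, equivalently $D'(1/x)=u(1/x)^{2}C'(x)$, whereas the lemma asserts $D'(1/x)=u(x)^{2}C'(x)$; substituting $x\mapsto 1/x$ does \emph{not} convert one into the other, since $u(1/x)^{2}\not\equiv u(x)^{2}\pmod{x^{n-1}-1}$ once $n\geq 7$ (the coefficient of $x^{2s}$ is $\beta^{2\pi^{-s}}+\beta^{-2\pi^{-s}}$ in one and $\beta^{2\pi^{s}}+\beta^{-2\pi^{s}}$ in the other, and these agree only when $\pi^{2s}=\pm 1$). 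To be fair, the mismatch is inherited from the paper, not created by you: the definitions above the lemma, which you followed as the statement instructs, set $C'_s=C_{\pi^{-s}}$, and under that convention your $u(1/x)^{2}$ version is the correct one; the paper's own proof (and the application in Theorem \ref{mainprop}) silently switches to $C'_t=C_{\pi^{t}}$ and $D'_s=D_{\pi^{s}}$, under which the stated $u(x)^{2}$ version holds. The two conventions are exchanged by replacing $\pi$ with $\pi^{-1}$, which swaps $u(x)$ and $u(1/x)$. Your core reduction is convention-independent, so the fix is purely notational: adopt the proof's indexing, restate the transform relation as $C'(1/x)=u(x)c'(x)+c_0e(x)$, and the identical computation then yields the lemma verbatim; as written, though, your last sentence asserts an equivalence that is false as a polynomial identity.
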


\begin{proof}

The inverse Fourier transform can be written as 
$$ 
c_i=C_0 + \sum_{k=1}^{n-1} \beta^{-ik} C_k= c_{\infty} + \sum_{k=1}^{n-1}  \beta^{-ik} C_k. 
$$ 
Now we have 
$$ 
d_i=c_{-1/i}=c_{\infty} + \sum_{k=1}^{n-1}  \beta^{(1/i)k} C_k, \ \ \  i=1,2, \cdots, n-1,  
$$
and $d_0=c_{\infty}$. Consequently, for $1\le j\le n-1$,
\begin{eqnarray}\label{eqn-jj153}
D_j &=& d_0 + \sum_{i=1}^{n-1} \beta^{ij} d_i \nonumber \\ 
&=& c_{\infty} + \sum_{i=1}^{n-1} \beta^{ij} \left( c_{\infty} + \sum_{k=1}^{n-1}  \beta^{(1/i)k} C_k  \right) \nonumber \\
&=& \sum_{i=1}^{n-1} \beta^{ij} \sum_{k=1}^{n-1}  \beta^{(1/i)k} C_k. 
\end{eqnarray} 
Now we change indices again as follows: 
$$ 
i=\pi^r, \ \ k=\pi^t, \ \ j=\pi^{-s}, 
$$ 
with $r,s,t\in \mathbb{Z}_{n-1}$.
Then (\ref{eqn-jj153}) becomes 
\begin{equation}\label{eqn-jj154}
D_{\pi^{-s}} = \sum_{r=0}^{n-2} \beta^{\pi^{-s+r}} \sum_{t=0}^{n-2}  \beta^{\pi^{-r+t}} C_{\pi^t}.
\end{equation} 
Alternatively, (\ref{eqn-jj154}) can be expressed as 
\begin{equation}\label{eqn-jj155}
D'_{-s} = \sum_{r=0}^{n-2}  u_{s-r} \sum_{t=0}^{n-2}  u_{r-t} C'_{t}, 
\end{equation} 
where $C'_{t}=C_{\pi^t}$, $D'_{-s}=D_{\pi^{-s}}$, $u_r=\beta^{\pi^{-r}}$  and $r,s,t\in \mathbb{Z}_{n-1}$.
Thus we have
\begin{align*}
u(x)^2 C'(x) &= \left(\sum_{s=0}^{n-2}u_sx^s\right)\left(\sum_{r=0}^{n-2}u_rx^r\right)
               \left(\sum_{t=0}^{n-2}C'_tx^t \right) \\
&= \left(\sum_{s=0}^{n-2}u_sx^s \right) \sum_{t=0}^{n-2}~\sum_{r=0}^{n-2}(u_rC'_t)x^{r+t}\\
&= \left(\sum_{s=0}^{n-2}u_sx^s \right) 
   \sum_{t=0}^{n-2}~\sum_{r'=t}^{t+n-2}u_{r'-t}C'_tx^{r'} & \\
&= \left(\sum_{s=0}^{n-2}u_sx^s\right) 
   \sum_{t=0}^{n-2}~\sum_{r=0}^{n-2}u_{r-t}C'_tx^{r} \pmod{x^{n-1}-1} \\
&=\sum_{t=0}^{n-2}~\sum_{r=0}^{n-2}u_{r-t} C'_t\sum_{s=0}^{n-2}u_sx^{r+s} \pmod{x^{n-1}-1} \\
&=\sum_{t=0}^{n-2}~\sum_{r=0}^{n-2}u_{r-t}C'_t\sum_{s'=s}^{s+n-2}u_{s'-r} x^{s'} \pmod{x^{n-1}-1} \\
&=\sum_{t=0}^{n-2}~\sum_{r=0}^{n-2}u_{r-t}C'_t\sum_{s=0}^{n-2}u_{s-r}x^{s} \pmod{x^{n-1}-1} \\
&=\sum_{s=0}^{n-2}\left(\sum_{r=0}^{n-2}u_{s-r}\sum_{t=0}^{n-2}u_{r-t} C'_t\right)x^{s} \pmod{x^{n-1}-1} \\
&=\sum_{s=0}^{n-2}D'_{-s}x^{s} \pmod{x^{n-1}-1}\\
&=D'(1/x) \pmod{x^{n-1}-1}.
\end{align*}
This completes the proof. 
\end{proof}

We are now ready to prove the following. 

\begin{theorem}\label{mainprop}
Let $\C$ be a binary cyclic code of length $n$ with defining set $T$. Assume the extended code 
$\overline{\C}$ is invariant under $\PSL_2(n)$. If there exists an $l\in\cQ$ ($\cN$, respectively) 
that is not in $T$, then $\cQ\cap T = \emptyset$ ($\cN\cap T = \emptyset$, respectively). Further, 
if $0 \in T$, then $\C$ must be the zero code. 
\end{theorem}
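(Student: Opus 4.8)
The plan is to pass to the frequency (Mattson--Solomon) domain and use that $T:y\mapsto -1/y$ lies in $\PSL_2(n)$, so Lemma~\ref{lammaBlahud} applies to every pair $\bar c\in\overline{\C}$, $\bar d=T(\bar c)\in\overline{\C}$. Writing $\hat T=\{t\in\Z_{n-1}:\pi^t\in T\}$ and recalling $C'_t=C_{\pi^t}$, $D'_s=D_{\pi^s}$, membership in $\C$ is exactly the vanishing pattern $C'_t=0$ for $t\in\hat T$, and likewise $D'_s=0$ for $s\in\hat T$ since $d\in\C$ as well. The first ingredient I would record is the freedom in choosing frequencies: because $\gcd(n,2)=1$, the algebra $\gf(2)[x]/(x^n-1)$ is semisimple, and the minimal ideal attached to the coset $\bC_l$ lies in $\C$ (as $l\notin T$) and is a copy of $\gf(2^m)$ on which $c\mapsto C_l$ is an isomorphism (this is exactly the content of Lemma~\ref{lemmaBeta}). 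Hence for every $a\in\gf(2^m)$ there is a codeword with $C_j=0$ for $j\notin\bC_l$, $C_l=a$, and the remaining coset entries equal to the $2$-power conjugates $C_{2^s l}=a^{2^s}$.

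The structural key is that Lemma~\ref{lammaBlahud} decouples by the parity of the index. Since $u(x)^2=\sum_r u_r^2 x^{2r}$ is supported on even exponents modulo $x^{n-1}-1$, the relation $D'(1/x)=u(x)^2C'(x)$ splits into $D'_{\mathrm{ev}}(1/x)=u(x)^2C'_{\mathrm{ev}}(x)$ and $D'_{\mathrm{odd}}(1/x)=u(x)^2C'_{\mathrm{odd}}(x)$, where even indices are precisely $\cQ$ and odd indices precisely $\cN$. I would then compute the even coefficients of $u(x)^2$ explicitly: using $u_r=\beta^{\pi^{-r}}$, the two preimages of an even index under $r\mapsto 2r$ modulo $n-1$, together with $\pi^{(n-1)/2}=-1$ in $\gf(n)$, give for even $a$
$$
[x^a]\,u(x)^2=\beta^{e}+\beta^{-e},\qquad e=2\pi^{-a/2}\in\gf(n)^*,
$$
which is nonzero because $\beta$ has order $n$ and $n$ is odd. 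Thus \emph{every} even-indexed coefficient of $u(x)^2$ is nonzero.

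Now the main argument is by contradiction: assume $l\in\cQ\setminus T$ yet $\cQ\cap T\neq\emptyset$, and fix $t^*$ with $\pi^{t^*}\in\cQ\cap T$, so $t^*$ is even and $D'_{t^*}=0$ for every codeword. Feeding the single-coset codewords above into the even part of Lemma~\ref{lammaBlahud} and reading off the coefficient of $x^{-t^*}$ in $D'_{\mathrm{ev}}(1/x)=u(x)^2C'_{\mathrm{ev}}(x)$ gives, for all $a\in\gf(2^m)$,
$$
\sum_{s\,:\,2^s l\in\cQ}\bigl([x^{-t^*-2^s l}]\,u(x)^2\bigr)\,a^{2^s}=0 .
$$
The exponents $2^s$ with $0\le s\le m-1$ are distinct and smaller than $2^m$, so a $2$-polynomial vanishing on all of $\gf(2^m)$ is identically zero; hence every coefficient vanishes, in particular the $s=0$ term $[x^{-t^*-l}]\,u(x)^2=0$. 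But $-t^*-l$ is even, so this coefficient is nonzero by the previous paragraph --- a contradiction, forcing $\cQ\cap T=\emptyset$. The statement for $\cN$ follows identically from the odd part of the split. For the final assertion, $0\in T$ means $g(1)=0$, so $C_0=c(1)=\sum_i c_i=0$ for all codewords, i.e.\ the extended coordinate of every codeword of $\overline{\C}$ is $0$; transitivity of $\PSL_2(n)$ on the $n+1$ coordinates then propagates this zero to every position, so $\overline{\C}$, and hence $\C$, is the zero code.

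I expect the main obstacle to be the bookkeeping that converts the codeword freedom into the linearized identity. One must justify that prescribing $C_l=a$ forces exactly the conjugate values $a^{2^s}$ on the rest of the coset, so that the coefficient of $x^{-t^*}$ is genuinely a $2$-polynomial in $a$, and one must track the index arithmetic modulo $n-1$ --- especially the two square roots of an even index and the evaluation $\pi^{(n-1)/2}=-1$ --- which is what makes the even coefficients of $u(x)^2$ nonzero. Once the decoupling, the nonvanishing of those coefficients, and the single-coset realization are in place, the contradiction is immediate.
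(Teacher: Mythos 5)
Your proposal is correct and follows essentially the same route as the paper's own proof: both feed the single-coset (minimal-ideal) codewords $c(x)=a(x)\,\overline{\m}_{\beta^l}(x)$ supplied by Lemma~\ref{lemmaBeta} into the transform identity of Lemma~\ref{lammaBlahud}, reduce to a linearized polynomial in $C_l$ of degree less than $2^m$, and obtain the contradiction from the nonvanishing of a coefficient of the form $\beta^{e}+\beta^{-e}$, finishing the $0\in T$ claim by transitivity exactly as the paper does. Your parity decoupling of $u(x)^2C'(x)$ is merely a cleaner way to organize what the paper carries out by explicit expansion in the two cases $2\in\cQ$ and $2\in\cN$; the substance is identical.
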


\begin{proof}
Let $\bc = (c_0,c_1...c_{n-1}) \in\calC $ be a codeword of $\calC$ and let the corresponding polynomials $C'(x), ~D'(x)$ and $u(x)$ be the same as before. Lemma \ref{lammaBlahud} says that 
\begin{equation}\label{maineq}
D'(1/x) = \sum_{r=0}^{n-2}D'_rx^{n-1-r} = u^2(x)C'(x) \bmod{(x^{n-1}-1)},
\end{equation} 
where $u(x) = \sum_{r=0}^{n-2}\beta^{\pi^{-r}}x^r$. 

We first consider the case that there exists an $l=\pi^{2u}\in\cQ$ such that $l\notin T$, 
where $0 \leq u \leq (n-3)/2$. To show that $T\cap\cQ=\emptyset$, we need to prove the following statement. \\

\noindent 
\textit{Statement:} For any even number $2s$ with $0\le 2s\le n-2$, there exists a codeword $c(s) \in\calC$ such that the corresponding term $D'_{2s}x^{n-1-2s}$ in Equation (\ref{maineq}) is nonzero, i.e. $D'_{2s} \neq 0 $. \\ 

\noindent 
\textit{Proof of the Statement:} Let $\overline{\m}_{\beta^i}(x) =(x^n-1)/\m_{\beta^i}(x)$ be a polynomial in $\gf(2)[x]$. For any $0\le s\le (n-3)/2$, define a codeword $\bc\in\calC$ by the following polynomial.
$$c(x) = a(x)\cdot\overline{\m}_{\beta^l}(x),$$
where $a(x)$ is a polynomial of degree less than $m$ over $GF(2)$, which will be figured out later.
Since $l\notin T$, clearly $c(x)$ is a codeword in $\calC$ with $C_i = c(\beta^i) =0$ for any $i\notin \bC_l$. Besides, for $i = l\cdot 2^w\in \bC_l$ it is easy to see that $C_{2^w\cdot l}=(C_{l})^{2^w}$ for $0\le w\le m-1$. Equivalently,
\begin{equation*}
C'_r = C_{\pi^r} = \begin{cases}
(C'_{2u})^{2^w} = C_l^{2^w},&~r = 2u+hw,~0\le w\le m-1;\\
0,&~otherwise.
\end{cases}
\end{equation*} 
From Equation (\ref{maineq}) we have 
\begin{align}\label{maineq2}
D'(1/x) &= g^2(x)C'(x) \pmod{x^{n-1}-1} \nonumber \\
&=\left(\sum_{r=0}^{n-2}\beta^{\pi^{-r}}x^r\right)^2\cdot \left(\sum_{w=0}^{m-1}C'_{2u+hw} x^{2u+hw}\right)  \pmod{x^{n-1}-1} \nonumber \\
&=x^{2u}\cdot\left(\sum_{r=0}^{n-2}\beta^{2\pi^{-r}}x^{2r}\right)\cdot
\left(\sum_{w=0}^{m-1}C_l^{2^w}x^{hw}\right)  \pmod{x^{n-1}-1} 
\end{align}
We continue our discussion by distinguishing the following two circumstances. First we consider the case that $2=\pi^h$ is a quadratic number in $\gf(n)$. Let $h'=h/2$. It then follows from (\ref{maineq2}) that 
\begin{align*}
& D'(1/x) \\ 
&=x^{2u}\sum_{r=0}^{n-2}\sum_{w=0}^{m-1}\beta^{2\pi^{-r}}C_l^{2^w}x^{2r+2h'w}  \pmod{x^{n-1}-1}\\
&=x^{2u}\sum_{w=0}^{m-1}C_l^{2^w}\sum_{s=h'w}^{h'w+n-2}(\beta^{2\pi^{h'w}})^{\pi^{-s}}x^{2s}  \pmod{x^{n-1}-1} \\
&=x^{2u}\sum_{w=0}^{m-1}C_l^{2^w}\sum_{s=0}^{n-2}(\beta^{2\pi^{h'w}})^{\pi^{-s}}x^{2s}   \pmod{x^{n-1}-1} \\
&=x^{2u}\sum_{s=0}^{n-2}\left(\sum_{w=0}^{m-1}C_l^{2^w}\beta^{2\pi^{h'w-s}}\right)x^{2s} 
      \pmod{x^{n-1}-1} \\
&=x^{2u}\sum_{s=0}^{\frac{n-3}{2}}\left(\sum_{w=0}^{m-1}C_l^{2^w}\beta^{2\pi^{h'w-s}}\right)x^{2s} + x^{2u}\sum_{s=\frac{n-1}{2}}^{n-2}\left(\sum_{w=0}^{m-1}C_l^{2^w}\beta^{2\pi^{h'w-s}}\right)x^{2s} 
      \pmod{x^{n-1}-1}\\
&=x^{2u}\sum_{s=0}^{\frac{n-3}{2}}\left(\sum_{w=0}^{m-1}C_l^{2^w}\beta^{2\pi^{h'w-s}}\right)x^{2s} + x^{2u}\sum_{s=0}^{\frac{n-3}{2}}\left(\sum_{w=0}^{m-1}C_l^{2^w}\beta^{2\pi^{h'w-s-\frac{n-1}{2}}}\right)x^{2s+n-1}     \pmod{x^{n-1}-1}\\
&=x^{2u}\sum_{s=0}^{\frac{n-3}{2}}\left(\sum_{w=0}^{m-1}(\beta^{2\pi^{h'w-s}}+\beta^{-2\pi^{h'w-s}})C_l^{2^w}\right)x^{2s}  \pmod{x^{n-1}-1}.
\end{align*} 
Thus for any $0\le s\le (n-3)/2$, by comparing terms of both sides we have 
$$D'_{2s} = \sum_{w=0}^{m-1}\left(\beta^{2\pi^{h'w+s+u-\frac{n-1}{2}}}+\beta^{-2\pi^{h'w+s+u-\frac{n-1}{2}}}\right)C_l^{2^w}= \sum_{w=0}^{m-1}\left(\beta^{2\pi^{h'w+s+u}}+\beta^{-2\pi^{h'w+s+u}}\right)C_l^{2^w}$$

Define a polynomial $L_s(x)$ over $\gf(2^m)$ to be 
$$L_s(x) = \sum_{w=0}^{m-1}\left(\beta^{2\pi^{h'w+s+u}}+\beta^{-2\pi^{h'w+s+u}}\right)x^{2^w}.$$ 
$L_s(x)$ is a linearized function independent of $\bc$ and $D'_{2s} = L_s(C_l)$. 
We now prove that $L_s(x)$ is not the zero function for any $s$. Notice that the degree 
of $L_s(x)$ is at most $2^{m-1}$. It suffices to prove that 
$$
\beta^{2\pi^{h'w+s+u}}+\beta^{-2\pi^{h'w+s+u}} \neq 0 
$$ 
for one $w$ with $0 \leq w \leq m-1$. We do this for $w=0$. On the contrary, suppose that 
$$ 
\beta^{2\pi^{h'w+s+u}}+\beta^{-2\pi^{h'w+s+u}} = 0 
$$ 
for $w=0$. We have then $\beta^{2\pi^{s+u}}+\beta^{-2\pi^{s+u}} = 0$, which is the same as 
$\beta^{4\pi^{s+u}}=1$. Note that $x \mapsto x^4$ is a permutation on $\gf(2^m)$. We obtain 
that $\beta^{\pi^{s+u}}=1$. Obviously, $\pi^{s+u} \bmod{n}$ is an integer in the set 
$\{1, 2, \cdots, n-1\}$. Since $\beta$ is an $n$-th primitive root of unity in $\gf(2^m)$, 
$\beta^{\pi^{s+u}} \neq 1$. Thus, we have reached a contradiction. 
Therefore, $L_s(x)$ is a nonzero function. 
Consequently, there is an element 
$\gamma$ in $\gf(2^m)$ such that $L_s(\gamma)\neq 0$. By Lemma \ref{lemmaBeta}, there exists a polynomial $a_s(x) \in \gf(2)[x]$ of degree less than $m$ such that $a_s(\beta^l) = {\gamma}{\overline{\m_{\beta^l}}(\beta^l)}^{-1}$. Set the $a(x)$ in the definition of $c(x)$ to be $a_s(x)$. 
Then we have 
$$C'_{2u} = C_l = c(\beta^l) = a_s(\beta^l)\cdot \overline{\m}_{\beta^l}(\beta^l) =\gamma.$$
Thus $D'_{2s} = L_s(C'_{2u})=L_s(\gamma)\neq 0$. 

Finally, we consider the case that $2=\pi^h\in\cN$. Then $h$ is odd and $m=(n-1)/h$ is even. Equation (\ref{maineq2}) becomes 
\begin{align*}
& D'(1/x) \\ 
&=x^{2u}\left(\sum_{r=0}^{n-2}\beta^{2\pi^{-r}}x^{2r}\right)\cdot
\left(\sum_{w=0}^{m/2-1}C_l^{2^{2w}}x^{2hw}+C_l^{2^{2w+1}}x^{2hw+h}\right) \pmod{x^{n-1}-1} \\
&=x^{2u}\sum_{w=0}^{m/2-1}\sum_{r=0}^{n-2}\beta^{2\pi^{-r}}x^{2r}\cdot
\left(C_l^{2^{2w}}x^{2hw}+C_l^{2^{2w+1}}x^{2hw+h}\right) \pmod{x^{n-1}-1} \\
&=x^{2u}\sum_{w=0}^{m/2-1}\sum_{r=0}^{n-2}\beta^{2\pi^{-r}}
C_l^{2^{2w}}x^{2(hw+r)}+x^{2u+h}\sum_{w=0}^{m/2-1}\sum_{r=0}^{n-2}\beta^{2\pi^{-r}}C_l^{2^{2w+1}}x^{2(hw+r)} \pmod{x^{n-1}-1}  \\
&=x^{2u}\sum_{w=0}^{m/2-1}C_l^{2^{2w}}\sum_{s=hw}^{hw+n-2}\beta^{2\pi^{hw-s}}
x^{2s}+x^{2u+h}\sum_{w=0}^{m/2-1}C_l^{2^{2w+1}}\sum_{s=hw}^{hw+n-2}\beta^{2\pi^{hw-s}}x^{2s} \pmod{x^{n-1}-1} \\
&=x^{2u}\sum_{w=0}^{m/2-1}C_l^{2^{2w}}\sum_{s=0}^{n-2}\beta^{2\pi^{hw-s}}
x^{2s}+x^{2u+h}\sum_{w=0}^{m/2-1}C_l^{2^{2w+1}}\sum_{s=0}^{n-2}\beta^{2\pi^{hw-s}}x^{2s} \pmod{x^{n-1}-1} \\
&= x^{2u}\sum_{s=0}^{\frac{n-3}{2}}\left(\sum_{w=0}^{m/2-1}(\beta^{2\pi^{h'w-s}}+\beta^{-2\pi^{h'w-s}})C_l^{2^{2w}}\right)x^{2s}\\ 
&\ \ +
x^{2u+h}\sum_{s=0}^{\frac{n-3}{2}}\left(\sum_{w=0}^{m/2-1}(\beta^{2\pi^{h'w-s}}+\beta^{-2\pi^{h'w-s}})C_l^{2^{2w+1}}\right)x^{2s} \pmod{x^{n-1}-1}. 
\end{align*} 
Again by comparison, for any $0\le s\le (n-3)/2$ we have 
$$D'_{2s} = \sum_{w=0}^{m/2-1}\left(\beta^{2\pi^{h'w+s+u}}+\beta^{-2\pi^{h'w+s+u}}\right)C_l^{2^{2w}}.$$
With similar arguments to the case $2\in\cQ$, we can find a $\bc_s\in\calC$ such that $D'_{2s}\neq 0$. We hereby finish the proof of the statement. 

By the statement above, for any $\pi^{2s}\in\cQ$, there exists a codeword $c(s) \in\calC$ such that $d(\beta^{\pi^{2s}}) = D_{\pi^{2s}} = D'_{2s} \neq 0$, where $d(x)=\sum_{i=0}^{n-1}d_ix^i$. Since 
$\bar{d}$ is also a codeword of $\overline{\calC}$, $d(\beta^{\pi^{2s}})\neq 0$ leads to $\pi^{2s} \notin T$. 
Thus we proved that $\cQ \cap T = \emptyset$.

For the case that there exists an $l = \pi^{2u+1} \in \cN$ such that $l \notin T$, the desired 
conclusion can be similarly proved.  

The conclusion of the last part is implied by the proof of Theorem \ref{thm-extendedcycliccode}

\end{proof}

\begin{theorem}\label{thm-pslcodeonlyif}
Let $\calC$ be a binary cyclic code with prime length $n>2$ and $0 <\dim(\calC)<n$.  
Let $\overline{\calC}$ be invariant under $\PSL_2(n)$. Let $g(x)$ denote the generator 
polynomial of $\calC$.  

If $n \equiv \pm 3 \pmod{8}$, then $g(x)=(x^n-1)/(x-1)$. 

If $n \equiv \pm 1 \pmod{8}$, then $g(x)=(x^n-1)/(x-1)$, or $g(x)$ 
is the generator polynomial of one of the odd-like quadratic residue binary codes. 
\end{theorem}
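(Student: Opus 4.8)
The plan is to reduce the theorem to a purely combinatorial statement about the defining set $T$ of $\calC$ and then settle it by a short case analysis driven entirely by Theorem \ref{mainprop} together with the quadratic character of $2$ modulo $n$. First I would record the constraints on $T$. Since $\dim(\calC) = n - |T| > 0$, the final assertion of Theorem \ref{mainprop} forces $0 \notin T$, for otherwise $\calC$ would be the zero code; and $\dim(\calC) < n$ forces $T \neq \emptyset$. As $T$ is a union of $2$-cyclotomic cosets modulo $n$ that omits $0$, it is a nonempty union of nonzero cosets, so $T \subseteq \cQ \cup \cN = \Z_n \setminus \{0\}$.

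Next I would restate Theorem \ref{mainprop} as a clean dichotomy: each of $\cQ$ and $\cN$ is either contained in $T$ or disjoint from $T$. Indeed, the hypothesis ``there exists $l \in \cQ$ with $l \notin T$ implies $\cQ \cap T = \emptyset$'' is exactly the contrapositive of ``$\cQ \cap T \neq \emptyset$ implies $\cQ \subseteq T$''; that is, $\cQ \subseteq T$ or $\cQ \cap T = \emptyset$, and symmetrically for $\cN$.

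The one genuinely arithmetic ingredient is how the $2$-cyclotomic cosets interact with the partition $\{\cQ, \cN\}$, which is governed by whether $2$ is a quadratic residue modulo $n$. By the second supplement to quadratic reciprocity, $2 \in \cQ$ iff $n \equiv \pm 1 \pmod 8$, while $2 \in \cN$ iff $n \equiv \pm 3 \pmod 8$. Writing each nonzero coset as $l \langle 2 \rangle$: if $2 \in \cN$, then $l$ and $2l$ lie in different classes of $\{\cQ, \cN\}$, so every nonzero coset meets both $\cQ$ and $\cN$; whereas if $2 \in \cQ$, then $\langle 2 \rangle \subseteq \cQ$, so each nonzero coset lies entirely in $\cQ$ or entirely in $\cN$, and hence both $\cQ$ and $\cN$ are themselves unions of cosets.

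Finally I would combine these facts with the dichotomy. If $n \equiv \pm 3 \pmod 8$, the nonempty $T$ contains some nonzero coset, which meets both $\cQ$ and $\cN$; the dichotomy then forces $\cQ \subseteq T$ and $\cN \subseteq T$, whence $T = \Z_n \setminus \{0\}$ and $g(x) = (x^n-1)/(x-1)$. If $n \equiv \pm 1 \pmod 8$, then $T = (T \cap \cQ) \cup (T \cap \cN)$ with each piece empty or full, so, discarding the excluded case $T = \emptyset$, we get $T \in \{\cQ, \cN, \cQ \cup \cN\}$; these correspond respectively to the two odd-like quadratic residue codes (generator polynomials $g_0$ and $g_1$) and to the code with $g(x) = (x^n-1)/(x-1)$. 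I expect the main, though modest, obstacle to be the coset-structure step: arguing correctly that the quadratic character of $2$ dictates whether a coset straddles or respects the $\cQ/\cN$ partition, and then matching the defining sets $\cQ$ and $\cN$ with the stated generator polynomials of the odd-like quadratic residue codes. Everything else is bookkeeping layered on top of Theorem \ref{mainprop}.
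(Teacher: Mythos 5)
Your proposal is correct and takes essentially the same route as the paper: both rest on reading Theorem \ref{mainprop} as the dichotomy ``$\cQ$ (resp.\ $\cN$) is either contained in $T$ or disjoint from it,'' combined with $0\notin T$, $T\neq\emptyset$, and the fact that $2\in\cQ$ iff $n\equiv\pm 1\pmod{8}$. Your coset-straddling argument for $n\equiv\pm 3\pmod{8}$ is precisely the paper's second (``another way'') argument for that case; the paper's first argument there is an alternative dimension count showing no binary cyclic code of length $n$ and dimension $(n+1)/2$ can exist when $(n-1)/\ord_n(2)$ is odd, which you do not need.
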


\begin{proof} 
Let $\calC$ be a binary cyclic code with prime length $n>2$ and $0 <\dim(\calC)<n$. 
Let $\overline{\calC}$ be invariant under $\PSL_2(n)$. 
By Theorem \ref{mainprop}, $\dim(\calC) \in \{1, (n+1)/2\}$.

It is well known that $2$ is a quadratic residue modulo $n$ if and only if 
$n \equiv \pm 1 \pmod{8}$. In the case that $n \equiv \pm 1 \pmod{8}$, $2$ 
is a quadratic residue. By 
Theorem \ref{mainprop} we have only the following four possibilities: 
\begin{enumerate} 
\item The defining set $T=\bQ \cup \bN$. In this subcase, $g(x)=(x^n-1)/(x-1)$. 
\item The definition set $T=\bQ$. In this subcase, 
      $$ 
        g(x)=\prod_{i \in \bQ} (x-\beta^i). 
      $$      
\item The definition set $T=\bN$. In this subcase, 
      $$ 
        g(x)=\prod_{i \in \bN} (x-\beta^i). 
      $$         
\end{enumerate} 

Consider now the case $n \equiv \pm 3 \pmod{8}$. In this case, $2$ must be a quadratic nonresidue. 
Hence, $(n-1)/\ord_n(2)$ must be odd. Note that $(x^n-1)/(x-1)$ is the product of $(n-1)/\ord_n(2)$ 
irreducible polynomials of degree $\ord_n(2)$ over $\gf(2)$. Since $(n-1)/\ord_n(2)$ is odd, there is no 
binary cyclic code of length $n$ and dimension $(n + 1)/2$. It then follows that $\C$ must have 
generator polynomial $(x^n-1)/(x-1)$. Another way to prove this conclusion goes as follows. 
If $T$ contains a quadratic residue $a$ modulo $n$, then $T$ must contain $2a \mod{n}$, which 
is a quadratic nonresidue. If $T$ contains a quadratic nonresidue $b$ modulo $n$, then $T$ must 
contain $2b \mod{n}$, which is a quadratic residue. Hence, $\C$ must have defining set 
$T=\{1,2,\cdots, n-1\}$, and thus generator polynomial $(x^n-1)/(x-1)$.      
\end{proof}

\subsection*{\textbf{Proof of Theorem \ref{thm-mymain}:}}  

It is known that the two odd-like quadratic residue binary codes are invariant under 
$\PSL_2(n)$ (see, for example, \cite{Blahut91}). It is easily seen that the four trivial 
binary codes $\C(0), \C(1), \C(n+1)$ and $\C(n)$ of length $n+1$ are invariant under $\PSL_2(n)$. 

Let $\widetilde{\C}$ be a binary code of length $n+1$ which is invariant under $\PSL_2(n)$. 
By Theorem \ref{thm-extendedcycliccode}, $\widetilde{\C}$ is either the whole space $\gf(2)^{n+1}$ 
or an extended cyclic code.   
The desired conclusion 
of the other part then follows from Theorem 
\ref{thm-pslcodeonlyif}.

\section{Concluding remarks} 

The main result of this paper tells us that the only nontrivial binary linear codes 
of length $n+1$ that are invariant under $\PSL_2(n)$, where $n$ is an odd prime, are 
the extended codes of the two odd-like quadratic residue codes of length $n \equiv 
\pm 1 \pmod{8}$. This means that the extended quadratic residue codes are very special. When 
$n \equiv - 1 \pmod{8}$, the extended quadratic residue codes over $\gf(2)$ are self-dual 
and hold $3$-designs. 

A self-dual binary code with parameters $[N, N/2, d]$ is said to be \emph{extremal and of Type II} 
if 
$$ 
d=4\left\lfloor \frac{N}{24} \right\rfloor + 4.  
$$   
It is known that only finitely many Type II extremal codes could exist. In fact, Type II 
extremal codes of length $N$ are known only for the following $N$: 
$$ 
8, 16, 24, 32, 40, 48, 56, 64, 80, 88, 104, 112, 136. 
$$   
Among these Type II extremal codes, those with length in $\{8, 24, 32, 48, 80, 104\}$ 
are extended quadratic residue codes \cite{Malevich}. This fact shows another specialty  
of the quadratic residue codes.  

Experimental data indicates that the main conclusion of this paper (i.e., 
Theorem \ref{thm-mymain}) is also true for linear codes over $\gf(q)$ for 
any prime power $q$. However, even if this is indeed true, it may not be 
easy to prove it. The reader is cordially invited to settle this open  
problem.


\begin{thebibliography}{99}

\bibitem{AK92} E. F. Assmus Jr., J. D. Key, Designs and Their Codes, Cambridge University 
Press, Cambridge, 1992.   

\bibitem{BJL} T. Beth, D. Jungnickel, H. Lenz, Design Theory, Cambridge University 
Press, Cambridge, 1999.    

\bibitem{Blahut91} 
R. E. Blahut, The Gleason-Prange Theorem, 
IEEE Trans. Inf. Theory 37 (1991) 1269--1273. 

\bibitem{HP03} W. C. Huffman, V. Pless, Fundamentals of Error-Correcting Codes,   
Cambridge University Press, Cambridge, 2003.    

\bibitem{MS77} F. J. MacWilliams, N. J. A. Sloane, The Theory of Error-Correcting Codes,  
North-Holland, Amsterdam, 1977.   

\bibitem{Malevich} 
A. Malevich, Extremal seld-dual codes, Ph.D. Thesis, 
Otto-von-Guericke-Universit\"at Magdeburg, 2012.  

\bibitem{RB}   M. Rahman and I. F. Blake,
Majority logic decoding using combinatorial designs,
{\it IEEE Trans. Inf. Theory} {\bf 21} (1975), 585--587.

\bibitem{R} L. D. Rudolph, A class of majority-logic decodable codes,
{\it IEEE Trans. Inf. Theory} {\bf 13} (1967), 305--307.
~                                                                            
\bibitem{Tonchev}  V. D. Tonchev, Codes and designs, in: V. S. Pless, W. C. Huffman, (Eds.),   
Handbook of Coding Theory, Vol. II, Elsevier, Amsterdam, 1998, pp. 1229--1268.   

\bibitem{Tonchevhb} V. D. Tonchev, Codes, in: C. J. Colbourn, J. H. Dinitz, (Eds.),   
Handbook of Combinatorial Designs, 2nd Edition, CRC Press, New York, 2007, pp. 677--701.     


\end{thebibliography}
\end{document}